\newtheorem{theorem}{Theorem}
\newtheorem{corollary}{Corollary}
\newtheorem*{theorem*}{Theorem}
\def\PDE{\textsc{pde}}
\def\NIE{\textsc{nie}}
\def\NPSEM{\textsc{npsem-ie}}
\def\DAG{\textsc{dag}}
\def\SWIG{\textsc{swig}}
\def\ci{\mbox{\ensuremath{\perp\!\!\!\perp}}}
\def\pr{\mathrm{pr}}
\providecommand{\@LN}[2]{}
\begin{document}

\def\spacingset#1{\renewcommand{\baselinestretch}%
{#1}\small\normalsize} \spacingset{1}

\title{\bf On Partial Identification of the Pure Direct Effect}
\author{Caleb Miles, Phyllis Kanki, Seema Meloni, and Eric Tchetgen Tchetgen\thanks{Caleb Miles is Postdoctoral Fellow, Department of Biostatistics, University of California, Berkeley 94720-7358. Phyllis Kanki is Professor and Seema Meloni is Research Associate, Department of Immunology and Infectious Diseases, Harvard School of Public Health, Boston, MA 02115. Eric Tchetgen Tchetgen is Professor, Departments of Biostatistics and Epidemiology, Harvard School of Public Health, Boston, MA 02115. The authors gratefully acknowledge the hard work and dedication of the clinical, data, and laboratory staff at the PEPFAR supported Harvard/AIDS Prevention Initiative in Nigeria (APIN) hospitals that provided secondary data for this analysis. This work was funded, in part, by grants from the U.S. National Institutes of Health. The contents are solely the responsibility of the authors and do not represent the official views of the funding institutions.}\hspace{.2cm}}
\date{}
\maketitle
\bigskip

\begin{abstract}
\noindent In causal mediation analysis, nonparametric identification of the pure (natural) direct effect typically relies on, in addition to no unobserved pre-exposure confounding, fundamental assumptions of (i) so-called ``cross-world-counterfactuals" independence and (ii) no exposure-induced confounding. When the mediator is binary, bounds for partial identification have been given when neither assumption is made, or alternatively when assuming only (ii). We extend existing bounds to the case of a polytomous mediator, and provide bounds for the case assuming only (i). We apply these bounds to data from the Harvard PEPFAR program in Nigeria, where we evaluate the extent to which the effects of antiretroviral therapy on virological failure are mediated by a patient's adherence, and show that inference on this effect is somewhat sensitive to model assumptions.
\end{abstract}

\noindent%
{\it Keywords:}  Cross-world counterfactual, Mediation, Natural direct effect, Partial identification, Pure direct effect, Single World Intervention Graph
\vfill

\newpage
\spacingset{1.45} 

\section{Introduction}

Causal mediation analysis seeks to determine the role that an intermediate variable plays in transmitting the effect from an exposure to an outcome. An indirect effect refers to the effect that goes through the intermediate variable in mediation analysis; a direct effect is a measure of the effect that does not. The study of causal mediation has in recent years enjoyed an explosion in popularity \citep{robins1992identifiability,robins1999testing,robins2003semantics,pearl2001direct,
avin2005identifiability,taylor2005counterfactual,petersen2006estimation,ten2007causal,
albert2008mediation,goetgeluk2008estimation,van2008direct,vanderweele2009marginal,
vanderweele2009conceptual,vanderweele2010odds,imai2010general,
imai2010identification,albert2011generalized,tchetgen2011causal,vanderweele2011causal,albert2012mediation,tchetgen2012semiparametric,
wang2012estimation,shpitser2013counterfactual,tchetgen2013inverse,tchetgen2014estimation,wang2013estimation,albert2015sensitivity,hsu2015surrogate}, not only in terms of theoretical developments, but also in practice, most notably in the fields of epidemiology and social sciences. This strand of work is based on ideas originating from \cite{robins1992identifiability} and \cite{pearl2001direct} grounded in the language of potential outcomes \citep{splawa1990application, rubin1974estimating, rubin1978bayesian} to give nonparametric definitions of effects involved in mediation analysis, allowing for settings where interactions and nonlinearities may be present.

Consider an intervention which sets the exposure of interest for all persons in the population to one of two possible values, a reference value or an active value. The total effect of such an intervention corresponds to the change of the counterfactual outcome mean if the exposure were set to the active value compared with if it were set to the reference value. \cite{robins1992identifiability} formalized the concept of effect decomposition of the total effect into direct and indirect effects by defining pure direct and indirect effects. \cite{pearl2001direct} relabeled these effects as natural direct and indirect effects. The pure direct effect ($\PDE$) corresponds to the change in the counterfactual outcome mean under an intervention which changes a person's exposure status from the reference value to the active value, while maintaining the person's mediator to the value it would have had under the exposure reference value. In contrast, the natural indirect effect ($\NIE$) corresponds to the change in the average counterfactual outcome under an intervention that sets a person's exposure value to the active value, while changing the value of the mediator from the value it would have had under the reference exposure value, to its value under the active exposure value. The $\PDE$ and $\NIE$ sum to give the total effect.

Identification of these natural effects has been somewhat controversial as it requires assumptions that may be overly restrictive for many applications in the health sciences. First, identification invokes a so-called cross-world-counterfactuals-independence assumption, which by virtue of involving counterfactuals under conflicting interventions on the exposure, can neither be enforced experimentally nor tested empirically \citep{pearl2001direct, robins2010alternative}. Secondly, a necessary assumption for identification rules out the presence of exposure-induced confounding of the mediator's effect on the outcome, even if all confounders are observed. While this assumption is in principle testable provided no unmeasured confounding, more often than not, post-exposure covariates are altogether ignored in routine application, in which case mediation analyses may be invalid. These issues have recently been considered, and some work has been done on partial or point identification under a weaker assumption. Specifically, on the one hand \cite{robins2010alternative} and \cite{tchetgen2014identification} provide conditions for point identification of the pure direct effect when a confounder is directly affected by the exposure. On the other hand, \cite{robins2010alternative} give bounds for the pure direct effect for binary mediator without making the cross-world-counterfactual-independence assumption, but assuming no exposure-induced confounding of the mediator-outcome relation, and \cite{tchetgen2014bounds} extend these bounds to account for exposure-induced confounding. Bounds are commonly employed in causal inference when structural assumptions are not sufficiently strong to give point identification of a causal parameter of interest \citep{robins1989analysis,balke1997probabilistic,zhang2003estimation,kaufman2005improved,cheng2006bounds,cai2008bounds,sjolander2009bounds,taguri2015principal}. We build on this previous work to provide a number of new nonparametric bounds for the pure direct effect allowing for a polytomous mediator when either (i) exposure-induced confounding is present, or (ii) one does not assume that cross-world counterfactuals of the mediating and outcome variables are independent, or (iii) both (i) and (ii) hold.

We apply these bounds to data from the Harvard PEPFAR program in Nigeria, where we evaluate the extent to which the effects of antiretroviral therapy on virological failure are mediated by a patient's adherence. We show that PEPFAR results are sensitive to the choice of assumptions made, consequently, we counsel investigators employing these effects to exercise caution in considering the basis for point identification and to explicitly state the assumptions required for them to be valid. Where assumptions are empirically untestable, they should be argued for on the basis of scientific understanding, and ideally the alternative should be explored by employing partial identification bounds given both here and elsewhere. While some work has been done to develop sensitivity analyses for unmeasured confounding of the mediator \citep{tchetgen2011causal,tchetgen2012semiparametric,vansteelandt2012natural}, sensitivity analyses for ranges of plausible associations between cross-world counterfactuals remain undeveloped. Further development of sensitivity analyses of both forms would be highly beneficial for practical use, and is fertile ground for future work. We hope that the work presented here will inspire deeper consideration and transparency regarding underlying identifying assumptions in the practice of mediation analysis.

\section{Preliminaries}
By way of introduction, the directed acyclic graph ($\DAG$) displayed in Fig. \ref{fig:1}.(a) illustrates the simplest possible mediation setting, where $A$ is defined to be the exposure taking either baseline value $a^*$ or comparison value $a$, $M$ is defined to be the (potential) mediator, and $Y$ is defined to be the outcome.
\begin{figure}
\centering
\begin{tabular}{ccc}
\\
\\
\begin{tikzpicture}[->,>=stealth',baseline={(A)},scale=1, line width=1pt]
\tikzstyle{every state}=[draw=none]
\node[shape=circle, draw, inner sep=1mm] (A) at (0,0) {$A$};
\node[shape=circle, draw, inner sep=1mm] (M) at (2,0) {$M$};
\node[shape=circle, draw, inner sep=1mm] (Y) at (4,0) {$Y$};

  \path 
	(A)  edge               (M)  
	(A)  edge  [bend right]  (Y)
	(M)  edge              (Y)
	;
\end{tikzpicture}
& 
&
\begin{tikzpicture}[->,>=stealth',baseline={(A)},scale=1, line width=1pt]
\tikzstyle{every state}=[draw=none]
\node[shape=semicircle, draw, inner sep=1mm, shape border rotate=90, inner sep=1.5mm] (A) at (0,0) {$A$};
\node[shape=semicircle, draw, shape border rotate=270, color=red, inner sep=1.85mm] (a) at (.75,0) {$\tilde{a}$};
\node[shape=semicircle, draw, inner sep=1mm, shape border rotate=90, inner sep=.5mm] (M) at (3,0) {$M(\tilde{a})$};
\node[shape=semicircle, draw, shape border rotate=270, color=red, inner sep=2mm] (m) at (4,0) {$\tilde{m}$};
\node[shape=ellipse, draw, inner sep=1mm] (Y) at (6.25,0) {$Y(\tilde{a},\tilde{m})$};

  \path 
	(a)  edge               (M)  
	(m)  edge               (Y)  
	(a)  edge  [bend right=60]   (Y)  
	;
\end{tikzpicture}
\end{tabular}
\caption{(a) The three-node mediation directed acyclic graph in a setting with no confounding. The nodes represent random variables, and the arrows represent possible causal effects of one random variable on another. (b) The single-world intervention graph in the setting of (a) under the intervention setting $A$ to $\tilde{a}$ and $M$ to $\tilde{m}$. The black nodes represent random variables under this intervention, the red nodes represent the level an intervened random variable takes under this intervention, and the arrows represent possible causal effects of one variable under this intervention on another.}
\label{fig:1}
\end{figure}
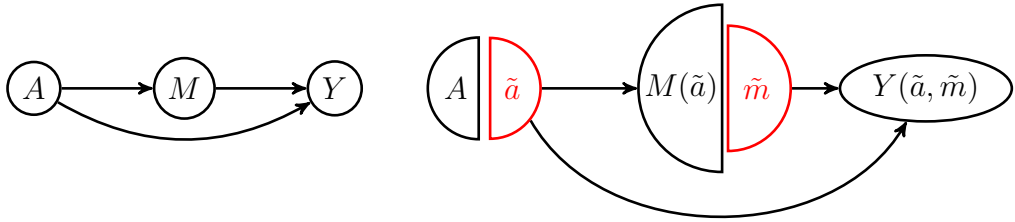
This $\DAG$ assumes randomization of the exposure, which for expositional simplicity we maintain throughout. The graph also encodes no unobserved confounding of the effect of $M$ on $Y$ given $A$. The effect along the path $A\rightarrow Y$ on the diagram is generally referred to as direct with respect to $M$, and the effect along the path $A\rightarrow M\rightarrow Y$ on the diagram is generally referred to as indirect with respect to $M$.

Further elaboration of the specific type of direct and indirect effect under consideration necessitates counterfactual definitions. Let $Y(a)$ denote a subject's outcome if treatment $A$ were set, possibly contrary to fact, to $a$. In the context of mediation, there will also be potential outcomes for the intermediate variable. Counterfactuals $M(a)$ and $Y(m,a)$ are defined similarly. In order to link these with the observed data, we adopt the standard set of consistency assumptions that
\begin{align*}
&\textrm{if } A=a\textrm{, then } M(a)=M\textrm{ with probability one,}\\
&\textrm{if } A=a\textrm{ and } M=m\textrm{, then } Y(m,a)=Y\textrm{ with probability one, and}\\
&\textrm{if } A=a\textrm{, then } Y(a)=Y\textrm{ with probability one.}\\
\end{align*}
In terms of counterfactuals, the randomization assumption encoded by the $\DAG$ in Fig. \ref{fig:1}.(a) is $\{Y(a,m),M(a)\}\ci A$ for all $a$ and $m$; the assumption of no unobserved confounding of $M$ given $A$ is $Y(a,m)\ci M(a)\mid A=a$ for all $a$ and $m$. Finally, we will consider as well defined the nested counterfactual $Y\{a,M(a^*)\}$, i.e., the counterfactual outcome under an intervention which sets the exposure to the comparison value $a$, and the mediator to the value it would have taken under the conflicting baseline exposure value $a^*$.

We may now define the pure/natural direct effect and natural indirect effect \citep{robins1992identifiability, pearl2001direct}, which form the following decomposition of the average causal effect:
\begin{align*}
&E\left\{ Y(a)\right\} -E\left\{ Y(a^*)\right\}  \\
&=\overset{\mathrm{total\text{ }effect}}{\overbrace{E\left[
Y\{a,M(a)\}\right] -E\left[ Y\{a^*,M(a^*)\}\right] }} \\
&=\overset{\mathrm{natural\text{ }indirect\text{ }effect}}{\overbrace{E%
\left[ Y\{a,M(a)\}\right] -E\left[ Y\{a,M(a^*)\}\right] }}+\overset{%
\mathrm{pure\text{ }direct\text{ }effect}}{\overbrace{E\left[
Y\{a,M(a^*)\}\right] -E\left[ Y\{a^*,M(a^*)\}\right) }}.\newline
\end{align*}
The terms $E\{Y(a)\}=E\left[Y\left\{a,M(a)\right\}\right]$, for all $a$, are identified under randomization of $A$. The parameter $\gamma_0\equiv E[Y\{a,M(a^*)\}]$ would be identified if one were to interpret the $\DAG$ in Fig. \ref{fig:1}.(a) as a nonparametric structural equation model with independent errors ($\NPSEM$). Structural equations provide a nonparametric algebraic interpretation of this $\DAG$ corresponding to three equations, one for each variable in the graph. Each random variable on the graph is associated with a distinct, arbitrary function, denoted $g$, and a distinct random disturbance, denoted $\varepsilon$, each with a subscript corresponding to its respective random variable. Each variable is generated by its corresponding function, which depends only on all variables that affect it directly (i.e., its parents on the graph), and its corresponding random disturbance, as follows:
\begin{align*}
A&=g_A(\varepsilon_A)\\
M&=g_M(A,\varepsilon_M)\\
Y&=g_Y(A,M,\varepsilon_Y).
\end{align*}
Under particular interventions, these structural equations naturally encode dependencies of counterfactuals. Consider, for example, two interventions, one setting $A=a^*$, and another setting $A=a$ and $M=m$. The structural equations then become
\begin{equation*}
\begin{aligned}[c]
A&=a^*\\
M(a^*)&=g_M(a^*,\varepsilon_M)\\
Y(a^*)&=g_Y(a^*,M(a^*),\varepsilon_Y)
\end{aligned}
\qquad\qquad\qquad\qquad\qquad
\begin{aligned}[c]
A&=a\\
M(a)&=m\\
Y(a,m)&=g_Y(a,m,\varepsilon_Y).
\end{aligned}
\end{equation*}

This formulation places no a priori restriction on the distribution of counterfactuals. The key assumption of the $\NPSEM$ is that the random disturbances are mutually independent. This allows us to make independence statements regarding counterfactuals under various, possibly-conflicting interventions. In particular, this model implies that for all $m$, (i) $\{M(a),Y(a,m)\}\ci A$, (ii) $Y(a,m)\ci M(a)\mid A=a$, and (iii) $Y(a,m)\ci M(a^*)\mid A=a$, which in turn suffice for identification of $\gamma_0$ \citep{pearl2001direct}. Independence statements such as (iii) are known as cross-world counterfactual statements if $a$ is not equal to $a^*$, due to their comparison of interventions that could never occur in the same world simultaneously. Independence condition (iii) can be seen to hold under the model by considering the $\NPSEM$ under a specific intervention and noting that the only source of randomness in $Y(a,m)=g_Y(a,m,\varepsilon_Y)$ is $\varepsilon_Y$ and the only source of randomness in $M(a^*)=g_M(a^*,\varepsilon_M)$ is $\varepsilon_M$. Thus, the cross-world-counterfactual-independence statement follows directly from independence of exogenous disturbances. However, such an independence is neither experimentally verifiable nor enforceable \citep{robins2010alternative}.

This issue has been discussed extensively \citep{robins2010alternative,richardson2013single}, and in large part motivated the development of the single-world intervention graphs ($\SWIG$s) of \cite{richardson2013single}. These causal graphs manage to elucidate this issue by graphically representing the counterfactuals themselves, allowing independence statements of counterfactuals to be read directly from the graph. Consider the $\SWIG$ in Fig. \ref{fig:1}.(b). By $d$-separation, it is clear that (i) $Y(a,m)\ci M(a)$ for all $a$ and $m$, however no such statement can be made from the graph about $Y(a,m)$ and $M(a^*)$ when $a\neq a^*$. Under this $\SWIG$, independence between $Y(a,m)$ and $M(a^*)$ is not assumed, and hence $\gamma_0$ is not point identified. \cite{robins2010alternative} provide the following bounds for its partial identification in the setting where $M$ is binary and $\SWIG$ independence assumptions $M(a)\ci A$ and $Y(a,m)\ci \{M(a),A\}$ hold for all $a$ and $m$:
\begin{align*}
\max &\{0,\pr(M=0\mid A=a^*)+E(Y\mid M=0,A=a)-1\}\\
+&\max \{0,\pr(M=1\mid A=a^*)+E(Y\mid M=1,A=a)-1\}\\
&\qquad\qquad\qquad\qquad\leq \gamma_0 \leq \\
\min &\{\pr(M=0\mid A=a^*),E(Y\mid M=0,A=a)\}\\
+&\min \{\pr(M=1\mid A=a^*),E(Y\mid M=1,A=a)\}.
\end{align*}
In Section 2, we extend this result to the setting of a polytomous $M$.

As previously mentioned, another often-overlooked condition required for identification of $\gamma_0$ is that there is no confounder of the mediator's effect on the outcome that is affected by the exposure. Such a confounder is present in the setting illustrated in the $\DAG$ in Fig. \ref{fig:2}.(a).
\begin{figure}
\centering
\begin{tabular}{ccc}
\\
\\
\begin{tikzpicture}[->,>=stealth',baseline={(A)},scale=1, line width=1pt]
\tikzstyle{every state}=[draw=none]
\node[shape=circle, draw, inner sep=1mm] (A) at (0,0) {$A$};
\node[shape=circle, draw, inner sep=1mm] (R) at (1.25,0) {$R$};
\node[shape=circle, draw, inner sep=1mm] (M) at (2.5,0) {$M$};
\node[shape=circle, draw, inner sep=1mm] (Y) at (3.75,0) {$Y$};

  \path 
	(A)  edge               (R)  
	(R)  edge               (M)
	(M)  edge               (Y)
	(A)  edge  [bend left]  (M)
	(A)  edge  [bend left=45]  (Y)
	(R)  edge  [bend right]  (Y)

	;
\end{tikzpicture}
& 
&
\begin{tikzpicture}[->,>=stealth',baseline={(A)},scale=1, line width=1pt]
\tikzstyle{every state}=[draw=none]
\node[shape=semicircle, draw, inner sep=1mm, shape border rotate=90, inner sep=1.5mm] (A) at (0,0) {$A$};
\node[shape=semicircle, draw, shape border rotate=270, color=red, inner sep=1.85mm] (a) at (.75,0) {$\tilde{a}$};
\node[shape=ellipse, draw, inner sep=1mm] (R) at (2.75,0) {$R(\tilde{a})$};
\node[shape=semicircle, draw, inner sep=1mm, shape border rotate=90, inner sep=.5mm] (M) at (4.75,0) {$M(\tilde{a})$};
\node[shape=semicircle, draw, shape border rotate=270, color=red, inner sep=2mm] (m) at (5.75,0) {$\tilde{m}$};
\node[shape=ellipse, draw, inner sep=1mm] (Y) at (7.75,0) {$Y(\tilde{a},\tilde{m})$};

  \path 
	(a)  edge               (R)  
	(R)  edge               (M)
	(m)  edge               (Y)  
	(a)  edge  [bend left=60]   (Y)  
	(R)  edge  [bend right=60]   (Y)  
	(a)  edge  [bend left]   (M)  
	;
\end{tikzpicture}

\end{tabular}
\caption{(a) A mediation directed acyclic graph in which $R$ is an exposure-induced confounder. The nodes represent random variables, and the arrows represent possible causal effects of one random variable on another. (b) The single-world intervention graph in the setting of (a) that has been intervened on to set $A$ to $\tilde{a}\in\{a,a^*\}$ and $M$ to $\tilde{m}$. The black nodes represent random variables under this intervention, the red nodes represent the level an intervened random variable takes under this intervention, and the arrows represent possible causal effects of one variable under this intervention on another.}
\label{fig:2}
\end{figure}
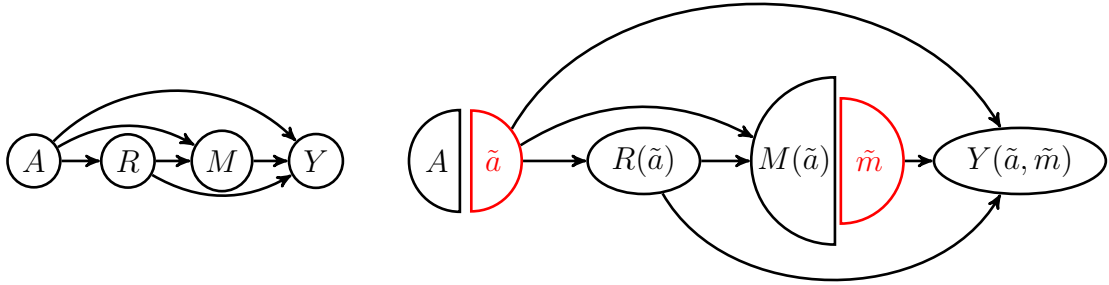
Generally, even under an $\NPSEM$ interpretation of this $\DAG$, $\gamma_0$ will not be identified in this setting. This is readily seen by considering the following representation under this model given by \cite{robins2010alternative}:
\begin{align}
\gamma_0=\sum\limits_{r,r^*}&E\left\{ E(Y\mid M,R=r,A=a)\mid R=r^*,A=a^*\right\}\pr\left\{R(a)=r,R(a^*)=r^*\right\}.
\end{align}
Clearly the joint probability term can never be identified from observed data, since we will never be able to observe $R(a)$ and $R(a^*)$ for the same individual.

A few conditions for identification have been proposed. \cite{robins2010alternative} give two. The first is that $R(a)\ci R(a^*)$, in which case the troublesome term in (1) will factor, giving
\begin{align*}
\gamma_0 = \sum\limits_{r^*,r}&E\left\{ E(Y\mid M,R=r,A=a)\mid R=r^*,A=a^*\right\}\pr(R=r^*\mid A=a^*)\\
&\times\pr(R=r\mid A=a).
\end{align*}
It seems biologically unlikely, however, that in a scenario in which $A$ affects $R$, the counterfactual $R$ under $A=a$ would not be predictive of the counterfactual $R$ under $A=a^*$. The other condition is that the counterfactual outcome under one exposure value is a deterministic function of the counterfactual for the other treatment, i.e., $R(a)=g\{R(a^*)\}$. In this case,
\begin{align*}
\gamma_0 = \sum\limits_{r^*,r}&E\left\{ E(Y\mid M,R=r,A=a)\mid R=r^*,A=a^*\right\}\pr(R=r^*\mid A=a^*)I\{r=g(r^*)\}.
\end{align*}
The above assumption is implied by rank preservation \citep{robins2010alternative}, which is unlikely to hold in social and health sciences as it rules out individual-level effect heterogeneity \citep{tchetgen2014identification}. As none of these conditions are experimentally verifiable, the authors themselves ``do not advocate blithely adopting such assumptions in order to preserve identification of the $\PDE$ in [this setting]" \citep{robins2010alternative}.

\cite{tchetgen2014identification} give two testable conditions for identification of $\gamma_0$ when $R$ is present. The first is of $A$--$R$ monotonicity, i.e., for Bernoulli $R$, $R(a)\geq R(a^*)$. If $R$ is a vector of Bernoulli random variables whose structural equations have independent errors, and if monotonicity holds for each element,
\[\gamma_0=\sum\limits_{r,r^*}E\left\{ E(Y\mid M,R=r,A=a)\mid R=r^*,A=a^*\right\}\prod\limits_{j=1}^k f_j(r_j,r^*_j,a,a^*)\]
where
\begin{align*}
f_j(r_j,r_j^*,a,a^*)=\left\{
\begin{array}{rl}
\pr(R_j=1\mid A=a^*) & \textrm{ if } r_j^*=r_j=1,\\
\pr(R_j=1\mid A=a)-\pr(R_j=1\mid A=a^*) & \textrm{ if } r_j^*=0 \textrm{ and } r_j=1,\\
0 & \textrm{ if } r_j^*=1 \textrm{ and } r_j=0,\\
\pr(R_j=0\mid A=a) & \textrm{ if } r_j^*=r_j=0.
\end{array}\right.
\end{align*}
Their second condition is no $M$--$R$ additive mean interaction, i.e.,
\[E(Y\mid m,r,a)-E(Y\mid m^*,r,a)-E(Y\mid m,r^*,a)+E(Y\mid m^*,r^*,a)=0,\]
for all levels $m$ and $m^*$ of $M$ and $r$ and $r^*$ of $R$. For discrete $M$ and $R$, this yields
\begin{align*}
\gamma_0 = &\sum_m \left\{E(Y\mid m,r^*,a)-E(Y\mid m^*,r^*,a)\right\}\pr(M=m\mid A=a^*)\\
&+\sum_r \left\{E(Y\mid m^*,r,a)-E(Y\mid m^*,r^*,a)\right\}\pr(R=r\mid A=a)\\
&+E(Y\mid m^*,r^*,a).
\end{align*}

Eschewing the cross-world-counterfactual assumptions of the $\NPSEM$ , \cite{tchetgen2014bounds} extend the bounds of \cite{robins2010alternative} to allow for the presence of an exposure-induced confounder when the mediator is binary:
\begin{align*}
\max &\left\{0,\pr(M=0\mid A=a^*)+\sum_r E(Y\mid M=0,R=r,A=a)\pr(R=r\mid A=a)-1\right\}\\
+\max& \left\{0,\pr(M=1\mid A=a^*)+\sum_r E(Y\mid M=1,R=r,A=a)\pr(R=r\mid A=a)-1\right\}\\
&\qquad\qquad\qquad\qquad\qquad\qquad\qquad\leq \gamma_0 \leq \\
\min &\left\{\pr(M=0\mid A=a^*),\sum_r E(Y\mid M=0,R=r,A=a)\pr(R=r\mid A=a)\right\}\\
+\min& \left\{\pr(M=1\mid A=a^*),\sum_r E(Y\mid M=1,R=r,A=a)\pr(R=r\mid A=a)\right\}.
\end{align*}
We extend these bounds as well to allow for polytomous $M$ in Section 3. Additionally, we construct bounds for $\gamma_0$ under an $\NPSEM$ that account for a discrete exposure-induced confounder, but require no further assumption.

\section{New partial identification results}
We begin by extending the bounds of \cite{robins2010alternative} and \cite{tchetgen2014bounds} to settings with discrete mediator and outcome. Proofs can be found in the Appendix.

\begin{theorem}
\label{theorem1}
Under the $\SWIG$ in either Fig. \ref{fig:1}.(b) or Fig. \ref{fig:2}.(b) with discrete $M$ and $Y$ and arbitrary $R$,
\begin{align*}
\sum\limits_{m,y}&y\left(\max\left[0,\pr\{M(a^*)=m\}+\pr\{Y(a,m)=y\}-1\right]I(y>0)\right.\\
&\left.+\min\left[\pr\{M(a^*)=m\},\pr\{Y(a,m)=y\}\right]I(y<0)\right)\\
&\qquad\qquad\qquad\qquad\qquad\leq \gamma_0 \leq\\
\sum\limits_{m,y}&y\left(\max\left[0,\pr \{M(a^*)=m\}+\pr\{Y(a,m)=y\}-1\right]I(y<0)\right.\\
&\left.+\min\left[\pr\{M(a^*)=m\},\pr\{Y(a,m)=y\}\right]I(y>0)\right).
\end{align*}
\end{theorem}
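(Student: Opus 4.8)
The plan is to reduce the claim to the classical Fr\'echet--Hoeffding bounds on the joint probability of two events. First I would use recursive substitution (composition), which holds by construction in the \FFR\ model underlying either \SWIG, to write the nested counterfactual for discrete $M$ as $Y\{a,M(a^*)\}=\sum_m Y(a,m)\,I\{M(a^*)=m\}$; in Fig.~\ref{fig:2}.(b) the symbol $Y(a,m)$ should be read as the counterfactual that fixes $A$ and $M$ but lets $R$ take its natural value $R(a)$. Expanding $Y(a,m)=\sum_y y\,I\{Y(a,m)=y\}$ for discrete $Y$ and taking expectations yields the exact identity
\begin{equation*}
\gamma_0=\sum_{m,y} y\,\pr\{Y(a,m)=y,\ M(a^*)=m\}.
\end{equation*}

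Next, for each pair $(m,y)$ I would bound the bivariate probability above in terms of its two margins using the Fr\'echet--Hoeffding inequalities,
\begin{align*}
\max\!\left[0,\pr\{M(a^*)=m\}+\pr\{Y(a,m)=y\}-1\right]&\le \pr\{Y(a,m)=y,\ M(a^*)=m\}\\
&\le \min\!\left[\pr\{M(a^*)=m\},\pr\{Y(a,m)=y\}\right].
\end{align*}
These two margins are exactly the quantities appearing in the theorem; obtaining them uses only the one-world independences read off the \SWIG\ together with consistency, and no cross-world statement (indeed they are then point identified from the observed law, although the theorem is phrased purely in counterfactual terms).

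The third step is to assemble the bound term by term. In $\sum_{m,y} y\,\pr\{Y(a,m)=y,M(a^*)=m\}$, a term with $y>0$ is made smallest by replacing the joint probability with its Fr\'echet lower bound and largest by its Fr\'echet upper bound; for $y<0$ the two roles swap, because multiplying by $y$ reverses the inequality; terms with $y=0$ vanish. The indicators $I(y>0)$ and $I(y<0)$ in the statement perform exactly this selection, so collecting terms produces the displayed lower and upper bounds and validity follows at once.

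The step needing the most care is the first: verifying that the nested counterfactual is well defined and, for the version of the result stated in observed-data terms, that $\pr\{M(a^*)=m\}$ and $\pr\{Y(a,m)=y\}$ are functionals of the observed law under each graph --- in particular that in Fig.~\ref{fig:2}.(b) one must carry $R$ along, so that $\pr\{Y(a,m)=y\}=\sum_r \pr(Y=y\mid M=m,R=r,A=a)\,\pr(R=r\mid A=a)$ rather than $\pr(Y=y\mid M=m,A=a)$. A separate issue, not needed for the inequality itself but worth flagging, is sharpness: term-by-term attainment of the Fr\'echet bounds is constrained by $\sum_y \pr\{Y(a,m)=y,M(a^*)=m\}=\pr\{M(a^*)=m\}$, so exhibiting a single joint law over all counterfactuals that is compatible with the \SWIG\ and the observed data and meets the bound is a genuinely separate construction.
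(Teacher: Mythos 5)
Your proposal is correct and follows essentially the same route as the paper's proof: both write $\gamma_0=\sum_{m,y}y\,\pr\{Y(a,m)=y,M(a^*)=m\}$ and then apply the Fr\'echet--Hoeffding bounds to each joint probability, with the sign of $y$ dictating which bound enters the lower versus upper sum. The only cosmetic difference is that the paper justifies the Fr\'echet--Hoeffding step via a copula representation with uniform latent variables, while you invoke the inequalities directly and are more explicit about the sign bookkeeping and about the identification of the margins (which the paper defers to Corollary 1).
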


The upper and lower bounds coincide when $Y(a,m)$ or $M(a^*)$ is degenerate, which follows from the properties of joint probability mass functions. The upper bound is achieved only if $Y(a,m)$ and $M(a^*)$ are comonotone for each $m$, i.e., if $F_{Y(a,m),M(a^*)}(y,m)=\min\left[F_{Y(a,m)}(y),F_{M(a^*)}(m)\right]$ for each $m$, where $F_X(\cdot)$ denotes the joint (or marginal) cumulative distribution function of the random vector (or scalar) $X$. The lower bound is achieved only if they are countermonotone for each $m$, i.e., if $F_{Y(a,m),M(a^*)}(y,m)=\max\left\{0,F_{Y(a,m)}(y)+F_{M(a^*)}(m)-1\right\}$ for each $m$. A straightforward application of the $g$-formula under the $\DAG$s in Fig. \ref{fig:1} and \ref{fig:2} yields the following corollaries:
\begin{corollary}
For polytomous $M$ and $Y$, $\gamma_0$ is partially identified under the $\SWIG$ in Fig. \ref{fig:1}.(b) by the bounds in Theorem 1 with $\pr\{M(a^*)=m\}=\pr(M=m\mid a^*)$ and $\pr\{Y(a,m)=y\}=\pr(Y=y\mid m,a)$. It is partially identified under the $\SWIG$ in Fig. \ref{fig:2}.(b) by the same bounds, but with $\pr\{M(a^*)=m\}=\pr(M=m\mid a^*)$ and $\pr\{Y(a,m)=y\}=\sum_r\pr(Y=y\mid m,r,a)\pr(R=r\mid a)$.
\end{corollary}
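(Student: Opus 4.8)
The plan is to observe that Theorem~\ref{theorem1} already delivers the bounds as functionals of the two marginal counterfactual laws $\pr\{M(a^*)=m\}$ and $\pr\{Y(a,m)=y\}$ alone, so all that remains is to identify each of these from the observed-data distribution under the respective causal diagrams---equivalently, under the $d$-separation relations encoded by the $\SWIG$s in Fig.~\ref{fig:1}.(b) and Fig.~\ref{fig:2}.(b)---and then substitute. This is a standard $g$-formula computation, and I would present it compactly.

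First I would dispose of $\pr\{M(a^*)=m\}$, which behaves identically in both settings: in each $\SWIG$ the node $M(\tilde a)$ is $d$-separated from $A$ (every path into it from $A$ is intercepted by the fixed node $\tilde a$), so $M(a^*)\ci A$, and then the consistency assumption that $A=a^*$ forces $M(a^*)=M$ gives $\pr\{M(a^*)=m\}=\pr(M=m\mid A=a^*)$.

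Next I would handle $\pr\{Y(a,m)=y\}$ under Fig.~\ref{fig:1}.(b). Here the $\SWIG$ gives $Y(a,m)\ci\{M(a),A\}$, so conditioning on the event $\{M(a)=m,\,A=a\}$ leaves the law of $Y(a,m)$ unchanged; applying consistency twice---on $\{A=a\}$ the event $\{M(a)=m\}$ coincides with $\{M=m\}$, and on $\{A=a,M=m\}$ one has $Y(a,m)=Y$---collapses this to $\pr(Y=y\mid M=m,A=a)$. For Fig.~\ref{fig:2}.(b) I would insert the exposure-induced confounder by writing $\pr\{Y(a,m)=y\}=\sum_r\pr\{Y(a,m)=y\mid R(a)=r\}\,\pr\{R(a)=r\}$. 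The marginal factor is treated as above ($R(\tilde a)\ci A$ on the $\SWIG$, then consistency) to give $\pr(R=r\mid A=a)$; for the conditional factor I would note that on this $\SWIG$ the $M\to Y$ edge has been severed by the intervention on $M$, so conditionally on $R(a)$ the node $Y(a,m)$ is $d$-separated from both $M(a)$ and $A$, i.e.\ $Y(a,m)\ci\{M(a),A\}\mid R(a)$; conditioning additionally on $\{M(a)=m,A=a\}$ is then harmless, and the same consistency bookkeeping yields $\pr(Y=y\mid M=m,R=r,A=a)$. Assembling the pieces and substituting both functionals into Theorem~\ref{theorem1} gives the corollary.

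There is no genuine obstacle here, but the step requiring the most care is the conditional-independence claim $Y(a,m)\ci\{M(a),A\}\mid R(a)$ in the $R$ setting: the corresponding \emph{unconditional} exchangeability $Y(a,m)\ci M(a)$ is exactly what fails in the presence of an exposure-induced confounder, and would be needed for point identification, so one must be explicit that only the weaker version conditional on $R(a)$ is invoked and that it holds on the $\SWIG$ of Fig.~\ref{fig:2}.(b) precisely because intervening on $M$ deletes the outgoing mediator edge. Everything else is routine translation between counterfactual and observed-data conditioning events via consistency.
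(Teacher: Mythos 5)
Your proposal is correct and matches the paper's approach: the paper simply asserts that the corollary follows by ``a straightforward application of the $g$-formula'' under the two graphs, and your argument is exactly that computation spelled out, using the $\SWIG$ independences $M(a^*)\ci A$, $Y(a,m)\ci\{M(a),A\}$ (Fig.~\ref{fig:1}.(b)), and $R(a)\ci A$ together with $Y(a,m)\ci\{M(a),A\}\mid R(a)$ (Fig.~\ref{fig:2}.(b)), combined with consistency, before substituting into Theorem~\ref{theorem1}. Your emphasis that only the $R(a)$-conditional exchangeability is available in the exposure-induced-confounding case is precisely the right caveat and consistent with the paper's discussion.
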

The second part of the corollary continues to hold even if there were a hidden common cause of $R$ and $Y$ as in Fig. \ref{fig:3}, since the same $g$-formula applies in this setting.
\begin{figure}
\centering
\begin{tabular}{ccc}
\\
\\
\begin{tikzpicture}[->,>=stealth',baseline={(A)},scale=1, line width=1pt]
\tikzstyle{every state}=[draw=none]
\node[shape=circle, draw, inner sep=1mm] (A) at (0,0) {$A$};
\node[shape=circle, draw, inner sep=1mm] (R) at (1.25,0) {$R$};
\node[shape=circle, draw, inner sep=1mm] (M) at (2.5,0) {$M$};
\node[shape=circle, draw, inner sep=1mm] (Y) at (3.75,0) {$Y$};
\node[shape=circle, draw, inner sep=1mm, color=gray] (H) at (2.5,1.25) {$H$};

  \path 
	(A)  edge               (R)  
	(R)  edge               (M)
	(M)  edge               (Y)
	(A)  edge  [bend right]  (M)
	(A)  edge  [bend right=45]  (Y)
	(R)  edge  [bend left=35]  (Y)
	(H)  edge	[bend right]               (R)
	(H)  edge	[bend left]               (Y)
	;
\end{tikzpicture}
& 
&
\begin{tikzpicture}[->,>=stealth',baseline={(A)},scale=1, line width=1pt]
\tikzstyle{every state}=[draw=none]
\node[shape=semicircle, draw, inner sep=1mm, shape border rotate=90, inner sep=1.5mm] (A) at (0,0) {$A$};
\node[shape=semicircle, draw, shape border rotate=270, color=red, inner sep=1.85mm] (a) at (.75,0) {$\tilde{a}$};
\node[shape=ellipse, draw, inner sep=1mm] (R) at (2.75,0) {$R(\tilde{a})$};
\node[shape=semicircle, draw, inner sep=1mm, shape border rotate=90, inner sep=.5mm] (M) at (4.75,0) {$M(\tilde{a})$};
\node[shape=semicircle, draw, shape border rotate=270, color=red, inner sep=2mm] (m) at (5.75,0) {$\tilde{m}$};
\node[shape=ellipse, draw, inner sep=1mm] (Y) at (7.75,0) {$Y(\tilde{a},\tilde{m})$};
\node[shape=circle, draw, inner sep=1mm, color=gray] (H) at (5.25,2) {$H$};

  \path 
	(a)  edge               (R)  
	(R)  edge               (M)
	(m)  edge               (Y)  
	(H)  edge	[bend right]               (R)  
	(H)  edge	[bend left]               (Y)  
	(a)  edge  [bend right=60]   (Y)  
	(R)  edge  [bend left=45]   (Y)  
	(a)  edge  [bend right]   (M)  
	;
\end{tikzpicture}
\end{tabular}
\caption{(a) A mediation directed acyclic graph in which an unobserved variable $H$ affects $R$, an exposure-induced confounder, and $Y$. The black nodes represent observed random variables, and the arrows represent possible causal effects of one random variable on another. (b) The single-world intervention graph in the setting of (a) that has been intervened on to set $A$ to $\tilde{a}\in\{a,a^*\}$ and $M$ to $\tilde{m}$. The black nodes represent random variables under this intervention, the red nodes represent the level an intervened random variable takes under this intervention, and the arrows represent possible causal effects of one variable under this intervention on another. In each panel, the gray node represents a hidden random variable}
\label{fig:3}
\end{figure}
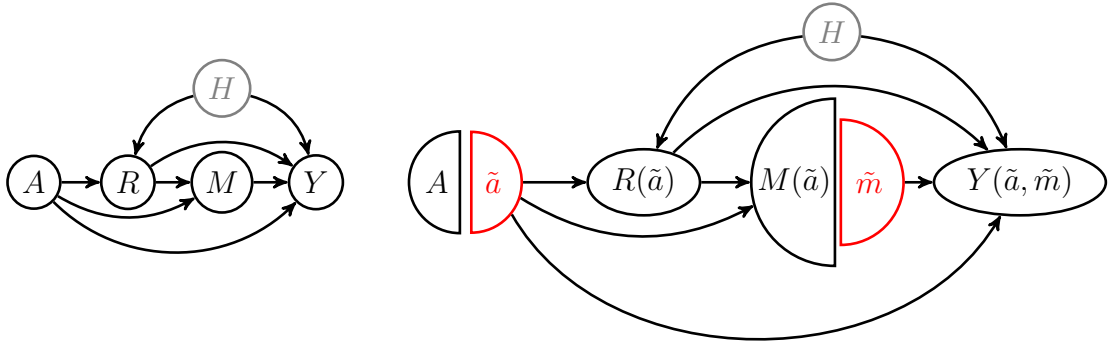
Whereas the previous results invoked no cross-world-counterfactual independences under the $\SWIG$ interpretation of the $\DAG$ in Fig. \ref{fig:2}.(a), sharper bounds are available under Pearl's $\NPSEM$ interpretation of these $\DAG$s, as derived in the following result.

\begin{theorem}
\label{theorem2}
For discrete $R$ taking values in $\{1,\hdots,p\}$, let $B$ be the $p^2\times (p-1)^2$ matrix
\[\left[
\begin{array}{ccccc}
I_{p-1} & 0_{(p-1)\times (p-1)} & \cdots & 0_{(p-1)\times (p-1)} & 0_{(p-1)\times (p-1)} \\
-1_{p-1}^T & 0_{p-1}^T & \cdots & 0_{p-1}^T & 0_{p-1}^T\\
0_{(p-1)\times (p-1)} & I_{p-1} & \cdots & 0_{(p-1)\times (p-1)}  & 0_{(p-1)\times (p-1)}\\
0_{p-1}^T & -1_{p-1}^T & \cdots & 0_{p-1}^T & 0_{p-1}^T\\
\vdots & \vdots & \ddots & \vdots & \vdots \\
0_{(p-1)\times (p-1)} & 0_{(p-1)\times (p-1)} & \cdots & I_{p-1} & 0_{(p-1)\times (p-1)}\\
0_{p-1}^T & 0_{p-1}^T & \cdots & -1_{p-1}^T & 0_{p-1}^T \\
0_{(p-1)\times (p-1)} & 0_{(p-1)\times (p-1)} & \cdots & 0_{(p-1)\times (p-1)} & I_{p-1}\\
0_{p-1}^T & 0_{p-1}^T & \cdots & 0_{p-1}^T & -1_{p-1}^T \\
-I_{p-1} & -I_{p-1} & \cdots & -I_{p-1} & -I_{p-1}\\
&& 1_{(p-1)^2}^T &&
\end{array}\right],\]
$d$ be the $p^2$-dimensional vector
\[
\left[
\begin{array}{c}
0_{p-1}\\
\pr\left(R=1\mid A=a\right)\\
0_{p-1}\\
\pr\left(R=2\mid A=a\right)\\
\vdots\\
0_{p-1}\\
\pr\left(R=p-1\mid A=a\right)\\
\pr\left(R=1\mid A=a^*\right)\\
\pr\left(R=2\mid A=a^*\right)\\
\vdots\\
\pr\left(R=p-1\mid A=a^*\right)\\
\pr\left(R=p\mid A=a\right)+\pr\left(R=p\mid A=a^*\right)-1\\
\end{array}\right],
\]
and $x$ be the vectorization of the matrix $\left[E\left\{ E(Y\mid M,R=r,A=a)\mid R=r^*,A=a^*\right\}\right]_{r,r^*}$. Under a $\NPSEM$ corresponding to the $\DAG$ in Fig. \ref{fig:2}.(a) where $M$ and $Y$ can be either continuous or discrete, $\gamma_0$ is partially identified by $\left[x^T(B\delta_L+d),x^T(B\delta_U+d)\right]$, where $\delta_L$ and $\delta_U$ are the minimizing and maximizing solutions respectively to the linear programming problem with objective function $x^TB\delta$ subject to the constraints
\[\left[
\begin{array}{c}
I_{(p-1)^2} \\
-I_{(p-1)^2}
\end{array}\right]
\delta\leq
\left[
\begin{array}{c}
\min \{\pr(R=1\mid A=a),\pr(R=1\mid A=a^*)\}\\
\min \{\pr(R=1\mid A=a),\pr(R=2\mid A=a^*)\}\\
\vdots\\
\min \{\pr(R=p\mid A=a),\pr(R=p-1\mid A=a^*)\}\\
\min \{\pr(R=p\mid A=a),\pr(R=p\mid A=a^*)\}\\
\min \{0,1-\pr(R=1\mid A=a)-\pr(R=1\mid A=a^*)\}\\
\min \{0,1-\pr(R=1\mid A=a)-\pr(R=2\mid A=a^*)\}\\
\vdots\\
\min \{0,1-\pr(R=p\mid A=a)-\pr(R=p-1\mid A=a^*)\}\\
\min \{0,1-\pr(R=p\mid A=a)-\pr(R=p\mid A=a^*)\}
\end{array}\right]\]
and $\delta\geq 0$. 
\end{theorem}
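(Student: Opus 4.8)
The plan is to recast the problem as a linear program over the joint law of the cross-world pair $\{R(a),R(a^*)\}$. Start from the representation~(1): setting $x_{r,r^*}=E\{E(Y\mid M,R=r,A=a)\mid R=r^*,A=a^*\}$ (a functional of the observed-data law, hence fixed, and well defined whether $M,Y$ are discrete or continuous) and $q_{r,r^*}=\pr\{R(a)=r,R(a^*)=r^*\}$, identity~(1) becomes $\gamma_0=\sum_{r,r^*}x_{r,r^*}q_{r,r^*}=x^{T}q$ with $x$ and $q$ vectorized in a common order. The only unidentified object is $q$. First I would pin down which $q$ are compatible with a $\NPSEM$ for the $\DAG$ in Fig.~\ref{fig:2}.(a) that reproduces the data. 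By randomization of $A$ and consistency the margins are forced, $\sum_{r^*}q_{r,r^*}=\pr(R=r\mid A=a)$ and $\sum_{r}q_{r,r^*}=\pr(R=r^*\mid A=a^*)$, so $q$ lies in the transportation polytope $\mathcal Q$ of joint distributions on $\{1,\dots,p\}^{2}$ with these two margins; conversely, because the exogenous errors are mutually independent, every $q\in\mathcal Q$ is induced by some such $\NPSEM$, leaving $x$ unchanged. Hence the identified set of $\gamma_0$ is $\{x^{T}q:q\in\mathcal Q\}$, which — $\mathcal Q$ being a compact convex polytope and $q\mapsto x^{T}q$ linear — is a closed interval whose endpoints are attained at vertices of $\mathcal Q$.

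Next I would bring this interval into the stated form by passing to a minimal-dimension linear program. The $2p$ margin equalities (one redundant) leave $(p-1)^{2}$ degrees of freedom; take $\delta=(q_{r,r^*})_{1\le r,r^*\le p-1}$ as the decision vector and solve the remaining cells from the margins and normalization, $q_{r,p}=\pr(R=r\mid A=a)-\sum_{r^*<p}q_{r,r^*}$ and $q_{p,r^*}=\pr(R=r^*\mid A=a^*)-\sum_{r<p}q_{r,r^*}$ and $q_{p,p}=\pr(R=p\mid A=a)+\pr(R=p\mid A=a^*)-1+\sum_{r,r^*<p}q_{r,r^*}$. Stacking the cells in the order recorded by the constant vector yields the affine identity $q=B\delta+d$, and one verifies row by row that $B$ and $d$ are exactly those displayed (each row of $B$ either picks out a free cell, or is the negative sum of the free cells along a row or down a column, or is $\mathbf 1^{T}$ for the corner cell, while $d$ holds the corresponding constants). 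Then $\gamma_0=x^{T}(B\delta+d)=x^{T}B\delta+x^{T}d$ with $x^{T}d$ constant, so the extremes of $\gamma_0$ over $\mathcal Q$ coincide with the extremes of the objective $x^{T}B\delta$ over the feasible set of $\delta$.

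It remains to express $q\in\mathcal Q$ as linear constraints on $\delta$. With the margins already built into $B$ and $d$, feasibility is exactly $B\delta+d\ge 0$; given the fixed margins, nonnegativity of all cells is equivalent to the Fr\'echet--Hoeffding cell bounds $\max\{0,\pr(R=r\mid A=a)+\pr(R=r^*\mid A=a^*)-1\}\le q_{r,r^*}\le\min\{\pr(R=r\mid A=a),\pr(R=r^*\mid A=a^*)\}$ (the upper bound because a cell cannot exceed either of its margins, the lower because the mass outside its row and column equals $1-\pr(R=r\mid A=a)-\pr(R=r^*\mid A=a^*)+q_{r,r^*}\ge 0$). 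Substituting $q=B\delta+d$ and rewriting the lower bound as $-\delta_{r,r^*}\le\min\{0,1-\pr(R=r\mid A=a)-\pr(R=r^*\mid A=a^*)\}$ turns these into the linear system displayed in the statement, to which $\delta\ge 0$ is appended. Since the feasible region is a compact polytope, the linear program attains its minimum at some $\delta_L$ and its maximum at some $\delta_U$, so $\gamma_0\in[x^{T}(B\delta_L+d),x^{T}(B\delta_U+d)]$; convexity of the feasible set and linearity of $\delta\mapsto x^{T}(B\delta+d)$ give that every value in this interval is attained, so the interval is precisely the identified set. (If $R(a)$ or $R(a^*)$ is degenerate, $\mathcal Q$ collapses to a single point and the two endpoints coincide, as in the degenerate case of Theorem~\ref{theorem1}.)

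The step I expect to be the real obstacle is the converse claim used above: that every $q\in\mathcal Q$ is realized by a genuine Pearl $\NPSEM$ agreeing with the observed-data law and leaving each $x_{r,r^*}$ untouched. The construction should exploit mutual independence of errors directly — take $\varepsilon_R$ uniform on $[0,1]$, fix any coupling of $\pr(R\mid A=a)$ and $\pr(R\mid A=a^*)$ with joint law $q$, realize it as $\bigl(g_R(a,\varepsilon_R),g_R(a^*,\varepsilon_R)\bigr)$ through an inverse-c.d.f.\ partition of the unit interval, and choose the remaining structural equations $g_M$ and $g_Y$, with their own independent errors, to reproduce the conditional laws $\pr(M\mid R,A)$ and $\pr(Y\mid M,R,A)$, neither of which involves $q$. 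One then checks that the resulting system belongs to Pearl's model class, induces the observed-data distribution, and yields $x_{r,r^*}$ as a functional of those conditional laws alone. The remaining items — the row-by-row identification of $B$ and $d$, the equivalence of cellwise nonnegativity with the Fr\'echet bounds, and the (harmless) redundancy of $\delta\ge 0$ — are routine bookkeeping rather than conceptual difficulties.
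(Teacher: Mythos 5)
Your argument is correct and follows essentially the same route as the paper's proof: write $\gamma_0=x^{T}\pi$ via representation (1), use the identified margins of $\{R(a),R(a^*)\}$ to reparametrize the coupling as $\pi=B\delta+d$ with $(p-1)^2$ free cells, and optimize the linear objective $x^{T}B\delta$ over the Fr\'echet--Hoeffding/nonnegativity constraints. Your only addition is the explicit sharpness step (constructing an $\NPSEM$ that realizes an arbitrary coupling with the given margins while preserving the observed-data law), which the paper's terser proof leaves implicit.
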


Similar to the previous result, these bounds coincide if either $R(a)$ or $R(a^*)$ is degenerate. The upper bound is achieved when $R(a)$ and $R(a^*)$ are comonotone; the lower bound is achieved when they are countermonotone. While these bounds are not available in closed form, they can be readily solved using standard software, such as with the lp\_solve function, which uses the revised simplex method and is accessible from a number of languages, including R, MATLAB, Python, and C. While the method used by this software is not guaranteed to converge at a polynomial rate \citep{klee1970good}, it is quite efficient in most cases \citep{schrijver1998theory}. The following corollary shows that these bounds reduce to a closed form when $R$ is binary.

\begin{corollary}
Under a $\NPSEM$ corresponding to the $\DAG$ in Fig. \ref{fig:2}.(a) with binary $R$,
\begin{align*}
\min\limits_{\pi_{11}\in\Pi}\sum\limits_{r,r^*}&E\left\{ E(Y\mid M,R=r,A=a)\mid R=r^*,A=a^*\right\}h(r,r^*,\pi_{11})\\
&\qquad\qquad\qquad\qquad\leq\gamma_0\leq\\
\max\limits_{\pi_{11}\in\Pi}\sum\limits_{r,r^*}&E\left\{ E(Y\mid M,R=r,A=a)\mid R=r^*,A=a^*\right\}h(r,r^*,\pi_{11})\\
\end{align*}
where $\Pi$ is the set
\begin{align*}
\{&\max\left\{0,\pr(R=1\mid A=a)+\pr(R=1\mid A=a^*)-1\right\},\\
&\min\left\{\pr(R=1\mid A=a),\pr(R=1\mid A=a^*)\right\}\}
\end{align*}
and
\begin{align*}
h(r,r^*,\pi_{11})=\left\{
\begin{array}{rl}
\pi_{11} & \textrm{ if } r^*=r=1,\\
\pr(R=1\mid A=a)-\pi_{11} & \textrm{ if } r^*=0 \textrm{ and } r=1,\\
\pr(R=1\mid A=a^*)-\pi_{11} & \textrm{ if } r^*=1 \textrm{ and } r=0,\\
1-\pr(R=1\mid A=a)-\pr(R=1\mid A=a^*)+\pi_{11} & \textrm{ if } r^*=r=0.
\end{array}\right.
\end{align*}
\end{corollary}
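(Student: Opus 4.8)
The plan is to specialize Theorem~\ref{theorem2} to $p=2$ and simply read off the asserted closed form. When $R$ is binary we have $(p-1)^2=1$, so the decision variable $\delta$ of the linear program collapses to a single nonnegative scalar, which I rename $\pi_{11}$. First I would write out the constraint block for $p=2$: $[I_{1},\,-I_{1}]^{T}\pi_{11}=(\pi_{11},-\pi_{11})^{T}$ must be bounded above by the two-vector with entries $\min\{\pr(R=1\mid A=a),\pr(R=1\mid A=a^*)\}$ and $\min\{0,\,1-\pr(R=1\mid A=a)-\pr(R=1\mid A=a^*)\}$. The first inequality gives $\pi_{11}\le\min\{\pr(R=1\mid A=a),\pr(R=1\mid A=a^*)\}$; the second gives $\pi_{11}\ge\max\{0,\,\pr(R=1\mid A=a)+\pr(R=1\mid A=a^*)-1\}$; together with $\pi_{11}\ge 0$ this is exactly the condition $\pi_{11}\in\Pi$. (These are of course just the Fr\'echet--Hoeffding feasibility bounds on a cell probability of a bivariate Bernoulli law with the stated margins.)

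Next I would evaluate $B$ and $d$ at $p=2$, using $\pr(R=2\mid A=\cdot)=1-\pr(R=1\mid A=\cdot)$. The matrix $B$ becomes the $4\times 1$ column $(1,-1,-1,1)^{T}$ and $d$ becomes $(0,\ \pr(R=1\mid A=a),\ \pr(R=1\mid A=a^*),\ 1-\pr(R=1\mid A=a)-\pr(R=1\mid A=a^*))^{T}$, so that $B\pi_{11}+d$ has coordinates $\pi_{11}$, $\pr(R=1\mid A=a)-\pi_{11}$, $\pr(R=1\mid A=a^*)-\pi_{11}$, and $1-\pr(R=1\mid A=a)-\pr(R=1\mid A=a^*)+\pi_{11}$. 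These are precisely $h(1,1,\pi_{11})$, $h(1,0,\pi_{11})$, $h(0,1,\pi_{11})$, $h(0,0,\pi_{11})$ once the level ``$R=p=2$'' of Theorem~\ref{theorem2} is identified with the level ``$R=0$'' of the corollary's $\{0,1\}$ coding. Since $x$ is the vectorization of $[E\{E(Y\mid M,R=r,A=a)\mid R=r^*,A=a^*\}]_{r,r^*}$, matching its entries to the rows of $B$ and $d$ gives $x^{T}(B\pi_{11}+d)=\sum_{r,r^*}E\{E(Y\mid M,R=r,A=a)\mid R=r^*,A=a^*\}\,h(r,r^*,\pi_{11})$, which is exactly the summand appearing in the corollary.

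Finally, $x^{T}(B\pi_{11}+d)=(x^{T}B)\pi_{11}+x^{T}d$ is an affine scalar function of $\pi_{11}$ on the interval $\Pi$; since adding the constant $x^{T}d$ changes neither the argmin nor the argmax, the minimizing and maximizing solutions $\delta_L,\delta_U$ are the endpoints of $\Pi$, and $x^{T}(B\delta_L+d)=\min_{\pi_{11}\in\Pi}\sum_{r,r^*}E\{\cdots\}h(r,r^*,\pi_{11})$, with the analogous statement for the upper bound; this also recovers the earlier observation that the extremes correspond to $R(a)$ and $R(a^*)$ being countermonotone and comonotone. I expect the only delicate point to be bookkeeping: aligning the vectorization order of $x$ with the row order induced by the block structure of $B$ and $d$, and tracking the relabeling of $R$'s levels between the $\{1,\dots,p\}$ convention of Theorem~\ref{theorem2} and the $\{0,1\}$ convention here, so that the third and fourth coordinates of $B\pi_{11}+d$ match $h(0,1,\cdot)$ and $h(0,0,\cdot)$ rather than being transposed. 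An equivalent, self-contained route bypasses the matrix machinery altogether: parametrize the joint law of $\{R(a),R(a^*)\}$ by $\pi_{11}=\pr\{R(a)=1,R(a^*)=1\}$, use consistency and randomization to pin the margins to $\pr(R=1\mid A=a)$ and $\pr(R=1\mid A=a^*)$ so that nonnegativity of all four cells is equivalent to $\pi_{11}\in\Pi$ and the cells equal $h(\cdot,\cdot,\pi_{11})$, substitute into the $g$-formula representation~(1), and again invoke affineness in $\pi_{11}$, with sharpness over $\Pi$ inherited from the proof of Theorem~\ref{theorem2}.
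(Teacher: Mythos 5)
Your proposal is correct and follows exactly the route the paper intends: the corollary is the $p=2$ specialization of Theorem~\ref{theorem2}, with $\delta$ collapsing to the scalar $\pi_{11}$, the constraints reducing to the Fr\'echet--Hoeffding interval whose endpoints form $\Pi$, and affineness of $x^{T}(B\pi_{11}+d)$ placing the optima at those endpoints. Your bookkeeping of the vectorization order and the relabeling of level $p=2$ as level $0$ matches the structure of $B$ and $d$, so nothing is missing.
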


Under $A-R$ monotonicity with binary $R$, the identifying functional given by \cite{tchetgen2014identification} is recovered at the upper bound in Corollary 2. All results given here can be extended to settings with observed pre-exposure confounders, which we denote $C$. In Corollary 1, one must first perform conditional inference given C, then subsequently average over the conditional bounds. This is in fact valid due to Jensen's inequality, because the constraints on the marginal joint probabilities are already implied by the constraints enforced on the conditional joint distributions, so no further constraints need be considered. However, Jensen's inequality does not apply in the case of Theorem 2, so controlling for $C$ requires estimating two pairs of candidate bounds and selecting the larger of the lower bounds and the smaller of the upper bounds. When $p$ is of moderate size, $\delta$ can be solved for each covariate pattern of $C$, i.e., without modeling the dependence of the cross-world-counterfactual joint distribution on $C$. Averaging the resulting conditional bounds gives the first pair of bounds. The second pair results from replacing each probability in the theorem with an average over the probabilities conditional on $C$ and doing the same with $x$.

\section{Application to Harvard PEPFAR data set}
We now consider an application to a data set collected by the Harvard President's Emergency Plan for AIDS Relief (PEPFAR) program in Nigeria. The data set consists of previously antiretroviral therapy (ART)-na{\"i}ve, HIV-1 infected adult patients who began ART in the program and were followed at least one year following initiation. Patients without reliable viral load data at two of the hospitals were excluded. Only complete cases initially prescribed to either TDF+3TC/FTC+NVP or AZT+3TC+NVP\footnote{3TC=lamivudine, AZT=zidovudine, FTC=emtricitabine, NVP=nevirapine, TDF=tenofovir} were considered for this analysis. Thus, the data set we consider consists of 6627 patients, 1919 of whom were prescribed to TDF+3TC/FTC+NVP, and the remaining 4708 prescribed to AZT+3TC+NVP.

There has accumulated evidence of a differential effect on virologic failure between these two first-line antiretroviral treatment regimens \citep{tang2012review}. Virologic failure is defined by the World Health Organization as repeat viral load above 1000 copies/mL. We base this on measurements at 12 and 18 months of ART duration in our analysis.

A natural question of scientific interest is what role adherence plays in mediating this differential effect. We are primarily interested in learning about the scientific mechanism of this effect on the individual level. The natural indirect effect best captures this mechanism in that it captures an isolated effect difference mediated by adherence by, in a sense, deactivating effect differences along all other possible causal pathways. We specifically examine the effect through adherence over the second six months since treatment assignment, i.e., the six months prior to the first viral load measurement. Identification is complicated by the presence of treatment toxicity, which clearly affects adherence directly, and has the potential to modify the effect of the treatment assignment on virologic failure. Thus, toxicity measured at six months after treatment assignment is an exposure-induced confounder of the effect of the mediator on the outcome. Further, toxicity and virologic failure are likely to be rendered dependent by unobserved underlying biological common causes as in Fig. \ref{fig:3}, where $H$ represents these hidden biological mechanisms. Because we define the mediator to be adherence over the second six months, adherence over the first six months is also an exposure-induced confounder along with toxicity, and must be accounted for. Had we defined the mediator to be adherence over the full year, measurement of the mediator and toxicity would have overlapped, violating the principle of temporal ordering.

Let $C$ denote the vector consisting of baseline covariates sex, age, marital status, WHO stage, hepatitis C virus, hepatitis B virus, CD4+ cell count, and viral load. Let $A$ be an indicator of ART assignment taking levels $a^*$ for TDF+3TC/FTC+NVP and $a$ for AZT+3TC+NVP; $R$ be a vector of two indicator variables, one of the presence of any lab toxicity, and one of adherence exceeding 95\%, both over the first six months following initiation of therapy; $M$ be an indicator of adherence exceeding 95\% over the subsequent six months; and $Y$ be an indicator of virologic failure at one year, i.e., repeat viral load above 1000 copies/mL at one year and at 18 months.

Here we estimate the natural indirect effect of $A$ on $Y$ through $M$, as defined above, on the risk difference scale using the various sets of identifying assumptions given above. Throughout, inference is performed using maximum likelihood for point estimation and a weighted bootstrap \citep{rao1992approximation,van1996weak} for confidence intervals, which appropriately accounts for the rare outcome. The results are summarized in Fig. \ref{fig:4}.
\begin{figure}
\centering
\includegraphics[scale=.55]{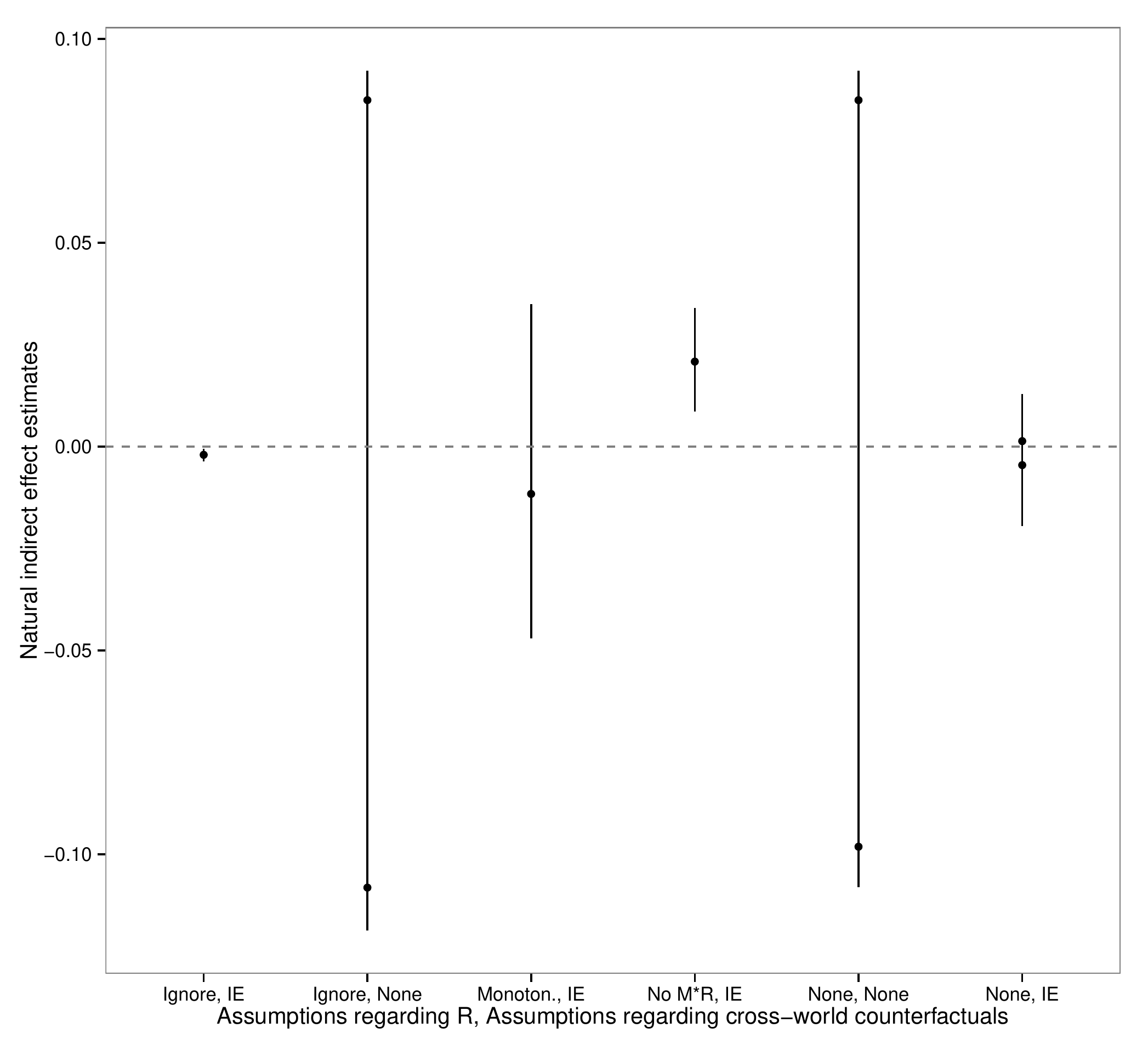}
\caption{A plot showing the estimated natural indirect effect of ART assignment on virologic failure with respect to adherence under the various assumptions. The assumptions vary across the horizontal axis, with the first part of the label indicating the assumption regarding the exposure-induced confounder, $R$, and the second part indicating the assumption regarding cross-world counterfactuals. For the assumptions regarding $R$, ``Ignore" means that the presence of $R$ is ignored altogether, ``Monoton." means the $A$--$R$ monotonicity assumption in Section 1, ``No M*R" means the no $M$--$R$ interaction assumption in Section 1, and ``None" means that $R$ was accounted for without additional assumptions. For the assumptions regarding cross-world counterfactuals, ``IE" means a $\NPSEM$ was assumed, and ``None" means no cross-world-counterfactuals independences were assumed. When the assumptions give partial identification, the two dots represent the point estimates of the upper and lower bound for the natural indirect effect, and the vertical bar represents the bootstrap 95\% confidence interval for the interval. When the assumptions give full identification, the single dot represents the point estimate of the natural indirect effect, and the vertical bar represents its bootstrap 95\% confidence interval.}
\label{fig:4}
\end{figure}
It is immediately apparent that inference is sensitive to which identifying assumptions are made. Consider an investigator who might be willing to rely on cross-world-counterfactual independences. If she decides to ignore the presence of toxicity, she might likely conclude that there is a very small, yet significant negative indirect effect. Conversely, were she to make the no $M$--$R$ interaction assumption, she would find a significant positive indirect effect with considerable uncertainty. In fact, an empirical test of this assumption reveals that it is unlikely to apply. Likewise, the data suggest that the required assumption of independent errors of the components of $R$ is also unlikely to hold. Nonetheless, we present both results for the sake of comparison. Results are fairly imprecise under monotonicity, and do not show a significant effect.

Another investigator unwilling to impose cross-world-counterfactual-independence assumptions is left with little to say as the bounds are wide, and include the null hypothesis of no $\NIE$, regardless of how toxicity is handled. Interestingly, the bounds that result from making no assumptions about the joint distribution of the cross-world $R$ counterfactuals are narrower than the bounds that result from ignoring $R$. That is, the bounds themselves appear narrower; the variances of the interval estimates appear to be comparable. This is because even though we do not impose any restrictions on the distribution of $R$ or its counterfactuals a priori, observing $R$ is clearly informative. The bounds accounting for $R$ have the added advantage of being the only identifying formula that remains valid when toxicity and virologic suppression are affected by an unobserved common cause, as in Fig. \ref{fig:3}.

Finally, incorporating $R$ results in narrower interval estimates than not imposing the $\NPSEM$, even if $R$ were ignored. Thus, cross-world-counterfactual-independences appear to have stronger empirical implications in the current analysis than assumptions regarding exposure-induced confounders. The general trend in these results is that little is gained in terms of precision by assumptions regarding $R$. In fact, the confidence interval for the bounds resulting from the independent errors assumption and no assumption regarding $R$ is narrower than the confidence interval for the estimate that results from assuming monotonicity, despite the fact that the $\NIE$ is point-identified in the latter case. The na\"{\i}ve assumption that $R$ is not a confounder is the only assumption about $R$ under which precision is gained.

\appendix

\section*{Appendix}
\subsection*{Proofs of theorems}

\begin{proof}[Proof of Theorem~\ref{theorem1}]
For each level $m$ and $y$, define $\pi_1(m,y)=\pr\{Y(a,m)=y\}$ and $\pi_2(m)=\pr\{M(a^*)=m\}$. There exist $U_1(m,y), U_2(m)\sim\mathcal{U}(0,1)$ such that $I\{Y(a,m)=y\}=I\{U_1(m,y)\leq\pi_1(m,y)\}$ and $I\{M(a^*)=m\}=I\{U_2(m)\leq\pi_2(m)\}$. The joint distribution $F_{U_1(m,y),U_2(m)}$, then, is a bivariate copula, for which Fr\'{e}chet--Hoeffding sharp bounds exist. Applying these to $\pr\left\{Y(a,m)=y,M(a^*)=m\right\}=F_{U_1(m,y),U_2(m)}\left\{\pi_1(m,y),\pi_2(m)\right\}$, we have
\begin{align*}
\max &\left[0,\pr\{M(a^*)=m\}+\pr\{Y(a,m)=y\}-1\right]\\
&\leq\pr\left\{Y(a,m)=y,M(a^*)=m\right\}\leq\\
\min &\left[\pr\{M(a^*)=m\},\pr\{Y(a,m)=y\}\right].
\end{align*}
Applying these bounds to each summand in
\[E[Y\{a,M(a^*)\}]=\sum\limits_{m,y}y\pr\{Y(a,m)=y,M(a^*)=m\}\]
yields the result.
\end{proof}

\begin{proof}[Proof of Theorem~\ref{theorem2}]
Let $\pi_{r,r^*} = \pr\left\{R(a)=r,R(a^*)=r^*\right\}$, $\pi$ be the vectorization of the matrix $[\pi_{r,r^*}]$, and $\delta$ be the vectorization of the matrix $[\pi_{r,r^*}]_{-p,-p}$, i.e., the vectorization of the matrix $\pi$ with row $p$ and column $p$ removed. Equation (1) can now be expressed as $\gamma_0=x^T\pi$, which is identified in $x$, but not $\pi$. Conditional on the marginal probabilities, which are identified, the joint probabilities have $(p-1)^2$ degrees of freedom, and can be expressed as $\pi=B\delta+d$. Since $x^TB\delta$ is linear in $\delta$ and each element of $\delta$ is constrained by
\begin{align*}
\max&\left\{0,\pr(R=r\mid A=a)+\pr(R=r^*\mid A=a^*)-1\right\}\\
&\qquad\qquad\qquad\leq\pi_{r,r^*}\leq\\
\min&\left\{\pr(R=r\mid A=a),\pr(R=r^*\mid A=a^*)\right\},
\end{align*}
the proposed linear programming problem will yield the $\delta$ that optimizes $x^TB\delta$, and hence $x^T(B\delta+d)$. Thus, $\gamma_0$ will be bounded by $x^T(B\delta+d)$ evaluated at the minimizing and maximizing linear programming solutions $\delta_L$ and $\delta_U$.
\end{proof}

\newpage

\bibliographystyle{apalike}

\bibliography{references}

\end{document}